\newtheorem{theorem}{Theorem}[section]
\newtheorem{proposition}[theorem]{Proposition}
\theoremstyle{remark}
\newtheorem{remark}[theorem]{Remark}
\numberwithin{equation}{section}
\begin{document}

\title[Spectrum of an open $q$-difference Toda chain]
{Spectral analysis of an open $q$-difference Toda chain with two-sided boundary interactions on the finite integer lattice}

\author{Jan Felipe  van Diejen}

\address{
Instituto de Matem\'aticas, Universidad de Talca,
Casilla 747, Talca, Chile}

\email{diejen@inst-mat.utalca.cl}

\subjclass[2010]{Primary: 33D52; Secondary: 05E05, 81Q35, 81Q80, 81U15, 82B23}
\keywords{$q$-difference Toda chain, Bethe Ansatz, Hall-Littlewood polynomials.}

\date{January 2023}

\begin{abstract} 
A quantum $n$-particle model consisting of an open $q$-difference Toda chain  with two-sided boundary interactions is placed on a finite integer lattice.
The spectrum and eigenbasis are computed by establishing the equivalence with a previously studied $q$-boson model from which the quantum integrability is inherited. Specifically, the $q$-boson-Toda correspondence in question yields Bethe Ansatz eigenfunctions in terms of hyperoctahedral Hall-Littlewood polynomials and provides the pertinent solutions of the Bethe Ansatz equations via  the global minima of corresponding Yang-Yang type Morse functions.
\end{abstract}

\maketitle

\section{Introduction}\label{sec1}
The relativistic Toda chain is an ubiquitous one-dimensional  $n$-particle model introduced by Ruijsenaars  that is integrable both at the level of classical and quantum mechanics
\cite{rui:relativistic}.  In the case of an open chain, integrable perturbations at the boundary were implemented
via the boundary Yang-Baxter equation \cite{kuz-tsy:quantum,sur:discrete}. At the quantum level, the hamiltonian of the relativistic Toda chain is given by a ($q$-)difference operator. Quantum groups connect the difference operator at issue to the quantum $K$-theory of flag manifolds \cite{bra-fin:finite,giv-lee:quantum} and
provide a natural representation-theoretical habitat for the construction of its eigenfunctions
\cite{eti:whittaker,sev:quantum}. 

When considering the quantum dynamics on an integer lattice the eigenvalue problem for the $q$-difference Toda chain can be solved in terms of $q$-Whittaker functions that arise as a
parameter specialization of the Macdonald polynomials, both in the case of particles moving on an infinite lattice \cite{ger-leb-obl:q-deformed} and in the case of  particles moving on a finite periodic lattice
\cite{duv-pas:q-bosons}. From the perspective of integrable probability, such particle models  are of interest in connection with the $q$-Whittaker process \cite{bor-cor:macdonald}. Representation-theoretical constructions for the pertinent   $q$-Whittaker functions can be found in
\cite{dif-ked:difference,dif-ked-tur:path,fei:fermionic}.

This note addresses the spectral problem for an open $n$-particle $q$-difference Toda chain on the  finite lattice $\{ 0,1,2,\ldots ,m\}$ that is
endowed with two-parameter  boundary interactions on both ends. The model could be thought of as a  finite discrete  and $q$-deformed counterpart of Sklyanin's open quantum Toda chain with general two-sided boundary perturbations governed by Morse potentials  \cite{skl:boundary}.
In the limit $m\to \infty$  the pertinent $q$-difference Toda hamiltonian was diagonalized
in terms of hyperoctahedral $q$-Whittaker functions that arise in turn through a parameter specialization of the Macdonald-Koornwinder polynomials \cite{die-ems:integrable}.
Here it will be shown that for finite $m$ an explicit eigenbasis  can be constructed from Bethe Ansatz wave functions given by Macdonald's hyperoctahedral Hall-Littlewood polynomials
\cite{mac:orthogonal}. 
The main idea is to exploit an equivalence between $q$-difference Toda chains and  $q$-boson models pointed out in \cite{duv-pas:q-bosons}.
By establishing a version of this equivalence in the current situation of an open chain with boundary perturbations, our $q$-difference Toda hamiltonian is mapped to the hamiltonian of
a $q$-boson model previously diagonalized in
\cite{die-ems-zur:completeness}. The upshot is that the commuting quantum integrals and the Bethe Ansatz eigenfunctions for  the  $q$-difference Toda chain can in this approach be retrieved
 directly from those in  \cite{die-ems-zur:completeness} for the corresponding $q$-boson model.

The material is organized as follows.
Section \ref{sec2} describes the hamiltonian of our $q$-difference Toda chain and verifies its self-adjointness. 
Section \ref{sec3} establishes  the equivalence with the $q$-boson model from  \cite{die-ems-zur:completeness}  and therewith retrieves the corresponding Bethe Ansatz wave functions
 in terms of hyperoctahedral Hall-Littlewood polynomials. 
The  Bethe Ansatz equations of interest are of a convex type studied in wider generality in
 \cite{die-ems:solutions}, which entails an explicit description of the spectrum via the global minima of associated  Yang-Yang-type Morse functions  detailed in Section \ref{sec4}.
The presentation closes in Section \ref{sec5} with a description of the spectral  analysis for the $q$-difference Toda chain in the degenerate  limit 
$q\to 1$.

\section{Open $q$-difference Toda chain with boundary interactions}\label{sec2}

\subsection{Quantum hamiltonian}
Given $m,n\in\mathbb{N}$, the $q$-difference Toda chain under consideration describes the quantum dynamics of $n$ interacting
particles hopping over the  finite integer lattice
 $\{ 0,1,2,\ldots , m \}$. The positions of these particles are encoded by a partition
 $\mu=(\mu_1,\mu_2,\ldots,\mu_n)$ in the configuration space
 \begin{equation}
 \Lambda^{(n,m)}=\{ \mu\in\mathbb{Z}^n \mid m\geq \mu_1\geq \mu_2\geq\cdots\geq\mu_n\geq0\} .
 \end{equation}
 The dynamics  is governed in turn by the following quantum hamiltonian 
 \begin{align}\label{Ht}
 H &= \beta_+ (1-q^{m-\mu_1})+\beta_-(1-q^{\mu_n} )\\
 &+  \sum_{1\leq  i\leq n} (1-\alpha_+ q^{m-\mu_1-1})^{\delta_{i-1}}(1-q^{\mu_{i-1}-\mu_i})T_i \nonumber  \\
& + \sum_{1\leq i\leq n} (1-\alpha_- q^{\mu_n-1})^{\delta_{n-i}}(1-q^{\mu_i-\mu_{i+1}})T_i^{-1} ,  \nonumber
 \end{align}
with
 \begin{equation*}\label{virtual}
\delta_i=\begin{cases}
1&\text{if}\ i=0, \\
0&\text{if}\ i\neq 0,
\end{cases}\qquad \qquad  \begin{cases} \mu_0\equiv m, \\ \mu_{n+1}\equiv 0. \end{cases}
\end{equation*}
Here $T_i$ and $T_i^{-1}$ denote hopping operators that act on  $n$-particle wave functions $\psi$ via a unit translation of the $i$th particle to the left and to the right, respectively:
 \begin{equation*}
 (T_i^{\epsilon} \psi) (\mu_1,\ldots,\mu_n)=  \psi (\mu_1,\ldots,\mu_{i-1},\mu_i+\epsilon,\mu_{i+1},\ldots, \mu_n) \quad(\epsilon\in \{1,-1\}).
 \end{equation*}
 
 The action of $H$ \eqref{Ht} on  wave functions
 $\psi:\Lambda^{(n,m)}\to\mathbb{C}$ is well-defined in the sense that the coefficient of 
 $ (T_i^{\epsilon} \psi) (\mu_1,\ldots,\mu_n)$ in $(H\psi)(\mu_1,\ldots,\mu_n)$
 vanishes for any $(\mu_1,\ldots,\mu_n)\in\Lambda^{(n,m)}$ such that
 $ (\mu_1,\ldots,\mu_{i-1},\mu_i+\epsilon,\mu_{i+1},\ldots, \mu_n) \not\in\Lambda^{(n,m)}$.
Notice also that the convention in the second brace below Eq. \eqref{Ht} can be interpreted as representing the positions of two additional particles fixed at the lattice end-points $0$ and $m$, respectively.  The parameter $q\in (-1,1)\setminus \{ 0\}$ denotes a scale parameter of the model governing the nearest neighbour interaction between the particles whereas the parameters
 $\alpha_\pm\in (-1, 1)$ and $\beta_\pm\in\mathbb{R}$ represent coupling constants regulating additional interactions at the boundary of the chain.
 
 Our main goal is to solve the spectral problem for the $q$-difference Toda hamiltonian $H$ \eqref{Ht} 
 in the $\binom{n+m}{n}$-dimensional Hilbert space $\ell^2\bigl(\Lambda^{(n,m)},\Delta \bigr)$
 of functions $\psi:\Lambda^{(n,m)}\to\mathbb{C}$, endowed with an inner product
 \begin{subequations}
\begin{equation}
\langle \psi,\phi\rangle_\Delta = \sum_{\mu\in\Lambda^{(n,m)} }\psi (\mu)\overline{\phi(\mu)}\Delta _\mu \qquad \left(\psi,\phi\in \ell^2\bigl(\Lambda^{(n,m)},\Delta \bigr) \right)
\end{equation}
determined by positive weights given by perturbed $q$-multinomials on $\Lambda^{(n,m)}$:
\begin{equation}\label{weights}
\Delta _\mu =
\frac{(q;q)_m}{(\alpha_+;q)_{m-\mu_1}  (\alpha_-;q)_{\mu_n} 
\prod_{0\leq i\leq n} (q;q)_{\mu_i-\mu_{i+1}} } 
\end{equation}
$ (\mu\in\Lambda^{(n,m)} $).
\end{subequations}
Here we have employed the standard $q$-shifted factorial
\begin{equation*}
(a;q)_l=
\begin{cases}
1&\text{if}\ l=0 \\
 (1-a)(1-aq)\cdots (1-aq^{l-1})&\text{if}\  l=1,2,3,\ldots
 \end{cases}
 \end{equation*}

\begin{proposition}[Self-adjointness]\label{sa:prp}
For   $\alpha_\pm\in (-1, 1)$,  $\beta_\pm\in\mathbb{R}$, and $q\in (-1,1)\setminus \{ 0\}$, the $q$-difference Toda hamiltonian $H$ \eqref{Ht} is self-adjoint in $\ell^2\bigl(\Lambda^{(n,m)},\Delta \bigr)$, i.e.
\begin{equation*}
\forall \psi,\phi\in \ell^2\bigl(\Lambda^{(n,m)},\Delta \bigr):\qquad \langle H\psi,\phi\rangle_\Delta = \langle \psi, H \phi\rangle_\Delta .
\end{equation*}
\end{proposition}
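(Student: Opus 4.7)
The plan is to split the hamiltonian $H$ of \eqref{Ht} into three pieces: the real-valued multiplication operator $H_0=\beta_+(1-q^{m-\mu_1})+\beta_-(1-q^{\mu_n})$, the forward-hopping part $H_+=\sum_{i=1}^n c_i^+(\mu)\,T_i$, and the backward-hopping part $H_-=\sum_{i=1}^n c_i^-(\mu)\,T_i^{-1}$, where $c_i^+(\mu)$ and $c_i^-(\mu)$ are the coefficients of $T_i$ and $T_i^{-1}$ read off from \eqref{Ht}. Since $\beta_\pm,q\in\mathbb{R}$ and $\alpha_\pm\in(-1,1)\subset\mathbb{R}$, the operator $H_0$ is manifestly self-adjoint on $\ell^2(\Lambda^{(n,m)},\Delta)$, so it suffices to prove $\langle H_+\psi,\phi\rangle_\Delta=\langle\psi,H_-\phi\rangle_\Delta$ for all $\psi,\phi$.

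To establish this, I would expand both sides as sums over $\mu\in\Lambda^{(n,m)}$ and, on the left, shift the summation variable $\mu\mapsto\mu-e_i$ in each $i$-summand, where $e_i$ denotes the $i$-th standard basis vector. This shift is legitimate because $c_i^+(\mu)$ vanishes precisely on the walls where $\mu+e_i$ would leave $\Lambda^{(n,m)}$ (namely $\mu_{i-1}=\mu_i$ for $i>1$ and $\mu_1=m$ for $i=1$), and symmetrically $c_i^-(\mu)$ vanishes wherever $\mu-e_i$ exits ($\mu_i=\mu_{i+1}$ for $i<n$ and $\mu_n=0$ for $i=n$); the sums therefore extend trivially to all of $\mathbb{Z}^n$ without altering their value. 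Comparing $i$-terms after the shift reduces the identity to the single pointwise relation
\[
c_i^+(\mu)\,\Delta_\mu \;=\; c_i^-(\mu+e_i)\,\Delta_{\mu+e_i},\qquad \mu,\mu+e_i\in\Lambda^{(n,m)},
\]
for each $i=1,\ldots,n$.

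This pointwise identity is the heart of the proof and unravels into a routine telescoping with $q$-Pochhammer symbols. Using the convention $\mu_0=m$, $\mu_{n+1}=0$ and the elementary rule $(a;q)_l/(a;q)_{l-1}=1-aq^{l-1}$, one finds that incrementing $\mu_i$ multiplies $\Delta_\mu$ by the bulk ratio $(1-q^{\mu_{i-1}-\mu_i})/(1-q^{\mu_i+1-\mu_{i+1}})$, supplemented by an extra factor $(1-\alpha_+q^{m-\mu_1-1})$ when $i=1$ and $1/(1-\alpha_-q^{\mu_n})$ when $i=n$, stemming from the perturbed pieces $(\alpha_\pm;q)_{\cdot}$ of the weight \eqref{weights}. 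On the other hand, the ratio $c_i^+(\mu)/c_i^-(\mu+e_i)$ read off from \eqref{Ht} produces exactly the same expression; indeed, the Kronecker factors $\delta_{i-1}$ and $\delta_{n-i}$ in the hamiltonian have been engineered precisely to match the boundary contributions of $\Delta_\mu$. The only subtle point is the boundary-vanishing of the hopping coefficients noted above, which makes the shifted sums well defined without imposing any boundary condition on $\psi$ or $\phi$; beyond this bookkeeping the verification is entirely mechanical and the proposition follows.
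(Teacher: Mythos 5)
Your proposal is correct and follows essentially the same route as the paper: the decomposition into a real multiplication part plus mutually adjoint hopping parts, the index shift $\mu\mapsto\mu-e_i$ justified by the vanishing of the hopping coefficients at the walls of $\Lambda^{(n,m)}$, and the pointwise weight identity $c_i^+(\mu)\,\Delta_\mu=c_i^-(\mu+e_i)\,\Delta_{\mu+e_i}$ are exactly the paper's bilinear identity and its steps \emph{(i)} and \emph{(ii)} (with $\tilde\mu=\mu+e_i$). The $q$-Pochhammer ratio computation you sketch, including the boundary factors $(1-\alpha_+q^{m-\mu_1-1})^{\delta_{i-1}}$ and $(1-\alpha_-q^{\mu_n})^{-\delta_{n-i}}$, checks out.
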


\begin{remark}
For $m\to\infty$, the $q$-difference Toda hamiltonian $H$ \eqref{Ht} was diagonalized  in \cite[Section 7]{die-ems:integrable} in terms of a unitary eigenfunction transform with a $q$-Whittaker  kernel built from a parameter specialization of the Macdonald-Koornwinder polynomials.
\end{remark}

\subsection{Proof of Proposition \ref{sa:prp}}
The action of $H$ \eqref{Ht} on $ \psi\in \ell^2\bigl(\Lambda^{(n,m)},\Delta \bigr)$ is of the form
\begin{align*}
(H\psi)(\mu)
=& \bigl( \beta_+ (1-q^{m-\mu_1}) +
\beta_- (1-q^{\mu_n}) \bigr)
 \psi(\mu ) \\
+&\sum_{\substack{1\leq i \leq n\\ \mu+e_i\in\Lambda^{(n,m)}}}
(1-\alpha_+ q^{m-\mu_1-1})^{\delta_{i-1} }
 (1-q^{\mu_{i-1}-\mu_i}) \psi (\mu+e_i )  \nonumber \\
+&\sum_{\substack{1\leq i \leq n\\ \mu -e_i\in\Lambda^{(n,m)}}}
(1-\alpha_-q^{\mu_n-1})^{\delta_{n-i} }
(1-q^{ \mu_i-\mu_{i+1}})  \psi (\mu-e_i) , \nonumber
\end{align*}
where  the vectors $e_1,\ldots, e_n$ represent the standard unit basis for $\mathbb{Z}^n$.

Since all coefficients of the difference operator in question are real, the asserted symmetry
 $\langle H\psi,\phi\rangle_\Delta = \langle \psi, H \phi\rangle_\Delta $ is immediate 
 from the following bilinear identity $\forall \psi,\phi\in  \ell^2\bigl(\Lambda^{(n,m)},\Delta \bigr)$:
 \begin{align*}
& \sum_{\mu\in\Lambda^{(n,m)}}
\Bigl( \sum_{\substack{1\leq i \leq n\\ \mu+e_i\in\Lambda^{(n,m)}}}
(1-\alpha_+ q^{m-\mu_1-1})^{\delta_{i-1} }
 (1-q^{\mu_{i-1}-\mu_i}) \psi (\mu+e_i ) \Bigr) \phi(\mu)  \Delta_\mu \\
  &\stackrel{(i)}{=}
  \sum_{\tilde\mu\in\Lambda^{(n,m)}}
\sum_{\substack{1\leq i \leq n\\ \tilde\mu-e_i\in\Lambda^{(n,m)}}}
(1-\alpha_+ q^{m-\tilde\mu_1})^{\delta_{i-1} }
 (1-q^{\tilde\mu_{i-1}-\tilde\mu_i+1}) \psi (\tilde\mu ) \phi(\tilde\mu-e_i) 
 \Delta_{\tilde\mu-e_i } \\
  &\stackrel{(ii)}{=}
  \sum_{\tilde\mu\in\Lambda^{(n,m)}}   \psi (\tilde\mu) 
 \Bigl( \sum_{\substack{1\leq i \leq n\\ \tilde\mu -e_i\in\Lambda^{(n,m)}}}
 (1-\alpha_-q^{\tilde\mu_n-1})^{\delta_{n-i} }
(1-q^{ \tilde\mu_i-\tilde\mu_{i+1}})
\phi(\tilde\mu-e_i)  \Bigr) \Delta_{\tilde\mu} .
\end{align*}
Step \emph{(i)} hinges on the substitution  $\mu=\tilde\mu-e_i$, which for a given 
$i\in\{ 1,\ldots ,n\}$ determines a bijection from the subset
$\{ \tilde\mu\in \Lambda^{(n,m)}\mid \tilde\mu-e_i\in \Lambda^{(n,m)}\}$  onto the subset
$\{ \mu\in \Lambda^{(n,m)}\mid \mu+e_i\in \Lambda^{(n,m)}\}$.
Step \emph{(ii)} uses
the elementary recurrence
$$
(1-\alpha_+ q^{m-\tilde\mu_1})^{\delta_{i-1} }
 (1-q^{\tilde\mu_{i-1}-\tilde\mu_i+1})  \Delta_{\tilde\mu-e_i }= (1-\alpha_-q^{\tilde\mu_n-1})^{\delta_{n-i} }
(1-q^{\tilde\mu_i-\tilde\mu_{i+1}})  \Delta_{\tilde\mu}
$$
for $\tilde\mu\in\Lambda^{(n,m)}$ such that $\tilde\mu-e_i\in \Lambda^{(n,m)}$ (and the convention $\tilde\mu_0\equiv m$, $\tilde\mu_{n+1}\equiv 0$).

\section{Eigenfunctions}\label{sec3}

\subsection{Bethe Ansatz}
While Proposition \ref{sa:prp} implies that the existence of an orthogonal eigenbasis diagonalizing $H$ \eqref{Ht}  in  $\ell^2\bigl(\Lambda^{(n,m)},\Delta \bigr)$ is evident from the spectral theorem for self-adjoint operators in finite dimension, the aim here is to provide an \emph{explicit} eigenbasis given by Bethe Ansatz wave functions in the spirit of \cite{die:q-deformation} for $n=1$.

To this end let us recall that for any $\lambda=(\lambda_1,\ldots ,\lambda_m)\in\Lambda^{(m,n)}$ and
$\xi=(\xi_1,\ldots, \xi_m)$ belonging to
\begin{equation}\label{regular}
\mathbb{R}^m_{\rm{reg}}= \{ \xi\in\mathbb{R}^m\mid 
2\xi_j,\xi_j-\xi_k,\xi_j+\xi_k\not\in 2\pi\mathbb{Z},\ \forall 1\leq j\neq k\leq m\} ,
\end{equation}
Macdonald's hyperoctahedral Hall-Littlewood polynomial (associated with the root system $BC_m$) is given by
\cite[\S 10]{mac:orthogonal}
\begin{subequations}
\begin{align}\label{HL}
R_\lambda (\xi_1,\ldots,\xi_m)& =   \\
 \sum_{\substack{ \sigma\in S_m \\ \epsilon\in \{ 1,-1\}^m}}  & C(\epsilon_1 \xi_{\sigma(1)},\ldots , \epsilon_m \xi_{\sigma(m)})
\exp ({\rm i}\epsilon_1 \xi_{\sigma(1)}\lambda_1+\cdots +{\rm i} \epsilon_n \xi_{\sigma(m)} \lambda_m)  \nonumber
\end{align}
with 
\begin{eqnarray}\label{C}
\lefteqn{C(\xi_1,\ldots ,\xi_m) =
\prod_{1\leq j\leq m} \frac{(1- \beta_+ e^{-{\rm i}\xi_j}+\alpha_+ e^{-2{\rm i}\xi_j})}{1-e^{-2{\rm i}\xi_j}}} && \\
&& \times \prod_{1\leq j<k \leq m} \left(\frac{1-q e^{-{\rm i}(\xi_j-\xi_k)}}{1-e^{-{\rm i}(\xi_j-\xi_k)}}\right)\left(  \frac{1-q e^{-{\rm i}(\xi_j+\xi_k)}}{1-e^{-{\rm i}(\xi_j+\xi_k)}} \right)  ,\nonumber
\end{eqnarray}
\end{subequations}
where the summation is over all permutations $\sigma= { \bigl( \begin{smallmatrix}1& 2& \cdots & m \\
 \sigma (1)&\sigma (2)&\cdots & \sigma (m)
 \end{smallmatrix}\bigr)}$ of the symmetric group $S_m$ and all sign configurations
$\epsilon=(\epsilon_1,\ldots,\epsilon_m)\in \{ 1,-1\}^m$.

For any $\lambda\in\Lambda^{(m,n)}$ and $0\leq i\leq n$ we denote the multiplicity of $i$ in $\lambda$ by 
\begin{subequations}
\begin{equation}
\mathrm{m}_i(\lambda) = |\{ 1\leq j\leq m \mid \lambda_{j}=i\} | .
\end{equation}
Additionally,  for $\mu\in\Lambda^{(n,m)}$ we write
\begin{equation}\label{conjugate}
\mu^\prime =\bigl(0^{m-\mu_1}1^{\mu_1-\mu_2}2^{\mu_2-\mu_3}\cdots
(n-1)^{\mu_{n-1}-\mu_n} n^{\mu_n}\bigr) 
\end{equation}
\end{subequations}
for  its conjugate partition  $\mu^\prime\in\Lambda^{(m,n)}$
(i.e. {`}with the columns and rows swapped'). In other words,
$\mu^\prime=(\mu_1^\prime,\mu_2^\prime,\ldots,\mu_m^\prime)$ is the (unique) partition in $\Lambda^{(m,n)}$ such that $\mathrm{m}_i(\mu^\prime)=\mu_i-\mu_{i+1}$ for $i=0,\ldots ,n$ (where, recall,  $\mu_0\equiv m$ and $\mu_{n+1}\equiv 0$). 
Notice in this connection that $| \Lambda^{(n,m)} | = | \Lambda^{(m,n)} | =\frac{(n+m)!}{n! \, m!}$ and that the mapping $\mu\to\mu^\prime$ \eqref{conjugate} defines a bijection from  $\Lambda^{(n,m)}$ onto  $\Lambda^{(m,n)}$.

From now on it will moreover be assumed (unless explicitly stated otherwise) that  the boundary parameters $\alpha_\pm$, $ \beta_\pm$ have values such that the roots $p_\pm$, $q_\pm$ of the two quadratic  polynomials $x^2-\beta_\pm x+\alpha_\pm$ belong to the interval $(-1,1)\setminus \{ 0\}$:
\begin{subequations}
\begin{equation}\label{bp:a}
\boxed{\alpha_\pm = p_\pm q_\pm \quad \text{and}\quad \beta_\pm=p_\pm+q_\pm\quad\text{with}\ q_\pm,p_\pm\in (-1,1) \setminus \{ 0\}}
\end{equation}
or equivalently
\begin{equation}\label{bp:b}
\boxed{0<\alpha_\pm^2 <1 \quad \text{and}\quad 4\alpha_\pm\leq \beta_\pm^2\leq (1+\alpha_\pm)^2 .}
\end{equation}
\end{subequations}

\begin{theorem}[Bethe Ansatz Wave Function]\label{BA:thm}
Let $q\in (-1,1)\setminus \{0\}$ and let the boundary parameters $\alpha_\pm$, $\beta_\pm$ belong to the domain specified in Eqs. \eqref{bp:a}, \eqref{bp:b}.
Given
$\xi=(\xi_1,\ldots ,\xi_m)\in\mathbb{R}^m_{\rm{reg}}$ \eqref{regular}, we define the wave function 
 $\psi_\xi \in \ell^2\bigl(\Lambda^{(n,m)},\Delta \bigr)$
through its values on $\Lambda^{(n,m)}$ as follows:
\begin{subequations}
\begin{equation}\label{BA}
\psi_\xi (\mu)=R_{\mu^\prime}(\xi_1,\ldots,\xi_m)\qquad (\mu\in\Lambda^{(n,m)}).
\end{equation}

The wave function $\psi_\xi$ \eqref{BA} solves the eigenvalue equation for the $q$-difference Toda hamiltonian $H$ \eqref{Ht}
\begin{equation}\label{ev-eq}
H\psi_\xi= E(\xi) \psi_\xi\quad \text{with}\quad E(\xi)=2(1-q)\sum_{1\leq j\leq m} \cos (\xi_j),
\end{equation}
provided the spectral parameter $\xi\in\mathbb{R}^m_{\rm reg}$ satisfies the algebraic system of Bethe Ansatz equations
\begin{align}\label{BAE}
 e^{2 {\rm i} n \xi_j}
&= \frac{ (1- \beta_+ e^{{\rm i}\xi_j}+\alpha_+ e^{2{\rm i}\xi_j})}{ (e^{2{\rm i}\xi_j}- \beta_+ e^{{\rm i}\xi_j}+\alpha_+ ) } 
  \frac{ (1- \beta_- e^{{\rm i}\xi_j}+\alpha_- e^{2{\rm i}\xi_j}) } {  (e^{2{\rm i}\xi_j}- \beta_- e^{{\rm i}\xi_j}+\alpha_- ) } \\
 &\qquad \times \prod_{\substack{1\le k \le m \\ k\neq j}}
       \frac{ (1-q e^{{\rm i}(\xi_j - \xi_k)}) (1-q e^{{\rm i}(\xi_j + \xi_k)})   }
             { ( e^{{\rm i}(\xi_j - \xi_k)}-q ) ( e^{{\rm i}(\xi_j+ \xi_k)}-q )   } \qquad\text{for}\quad j=1,\ldots ,m . \nonumber
\end{align}
\end{subequations}
\end{theorem}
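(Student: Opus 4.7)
The plan is to reduce the eigenvalue problem for $H$ \eqref{Ht} to a previously solved problem for a $q$-boson model, following the $q$-boson--Toda correspondence outlined in the introduction. Concretely, I would exploit the bijection $\mu\mapsto\mu^\prime$ between $\Lambda^{(n,m)}$ and $\Lambda^{(m,n)}$ recorded in Eq.~\eqref{conjugate} to transfer the Toda difference operator to an operator acting on functions $\phi(\lambda)=\psi(\mu)$ on $\Lambda^{(m,n)}$, and then recognize this operator as the $q$-boson hamiltonian whose Bethe Ansatz diagonalization in terms of hyperoctahedral Hall--Littlewood polynomials was carried out in \cite{die-ems-zur:completeness}. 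Since $\psi_\xi(\mu)=R_{\mu^\prime}(\xi)$ by definition, this identification reduces \eqref{ev-eq} to its $q$-boson counterpart.

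The heart of the argument is the translation of the hopping terms. The shift $\mu\mapsto\mu+e_i$ (with $\mu,\mu+e_i\in\Lambda^{(n,m)}$) increases $\mu_i-\mu_{i+1}$ by one and decreases $\mu_{i-1}-\mu_i$ by one. In view of the identity $\mathrm{m}_i(\mu^\prime)=\mu_i-\mu_{i+1}$, this is precisely a unit transfer from bin $i-1$ to bin $i$ in the conjugate partition $\mu^\prime$, i.e.\ a standard $q$-boson hop between adjacent sites. The factor $(1-q^{\mu_{i-1}-\mu_i})=(1-q^{\mathrm{m}_{i-1}(\mu^\prime)})$ then coincides with the usual $q$-boson creation amplitude; the factor $(1-\alpha_+q^{m-\mu_1-1})^{\delta_{i-1}}$, which activates only at $i=1$, becomes a boundary weight on hops emanating from the leftmost bin (of population $\mathrm{m}_0(\mu^\prime)=m-\mu_1$), and symmetrically $(1-\alpha_-q^{\mu_n-1})^{\delta_{n-i}}$ governs hops out of the rightmost bin. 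The diagonal contribution $\beta_+(1-q^{m-\mu_1})+\beta_-(1-q^{\mu_n})$ likewise reinterprets as the diagonal boundary potential of the $q$-boson hamiltonian in loc.~cit.

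Granting this operator identification, the theorem follows from the known Bethe Ansatz diagonalization of the $q$-boson model: the hyperoctahedral Hall--Littlewood polynomial $R_\lambda(\xi)$ with coefficient $C$ \eqref{C} encoding the boundary parameters $\alpha_\pm,\beta_\pm$ is an eigenfunction of that hamiltonian with eigenvalue $E(\xi)=2(1-q)\sum_{j}\cos(\xi_j)$, precisely when $\xi\in\mathbb{R}^m_{\mathrm{reg}}$ solves the Bethe Ansatz system \eqref{BAE}. Specializing $\lambda=\mu^\prime$ then yields both \eqref{ev-eq} and the compatibility condition \eqref{BAE}; note in particular that the eigenvalue $E(\xi)$ depends only on $\xi$ and not on the partition, as it must under the bijection.

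The main obstacle is the coefficient-by-coefficient bookkeeping in the operator identification: verifying that every term of $H$, including the somewhat intricate boundary factors controlled by the characters $\delta_{i-1}$ and $\delta_{n-i}$ together with the virtual-particle conventions $\mu_0\equiv m$ and $\mu_{n+1}\equiv 0$, maps precisely onto the corresponding term of the $q$-boson hamiltonian in the normalization adopted in \cite{die-ems-zur:completeness}, and confirming that the Bethe Ansatz equations \eqref{BAE} stated above match verbatim those found there (in particular that the boundary scattering factors are built from the quadratic polynomials $x^2-\beta_\pm x+\alpha_\pm$ whose reality-preserving constraints \eqref{bp:a}--\eqref{bp:b} guarantee applicability of the cited result). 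Once this combinatorial matching is executed, the theorem is a direct translation of the result in loc.~cit.
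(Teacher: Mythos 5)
Your proposal is correct and follows essentially the same route as the paper: the paper's proof likewise invokes the on-shell recurrence for the hyperoctahedral Hall--Littlewood polynomials established in \cite{die-ems-zur:completeness} (Propositions 8.4, 9.2 and Theorem 9.5 there) and transports it through the conjugation bijection $\mu\mapsto\mu^\prime$, using exactly the dictionary $\mathrm{m}_i(\mu^\prime)=\mu_i-\mu_{i+1}$ and the identification $\mu^\prime\pm e_j=(\mu\pm e_i)^\prime$ that you describe. The coefficient bookkeeping you flag as the remaining obstacle is precisely the content of the paper's short computation and works out as you anticipate.
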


\begin{remark}
The hyperoctahedral Hall-Littlewood polynomial $R_{\lambda}(\xi_1,\ldots,\xi_m)$, $\lambda\in\Lambda^{(m,n)}$ is  in fact a  symmetric polynomial in  $\cos(\xi_1),\ldots,\cos(\xi_m)$
of total degree $\lambda_1+\lambda_2+\cdots+\lambda_n$. Hence, it is clear that the Bethe Ansatz wave function $\psi_\xi$ in Theorem \ref{BA:thm} extends smoothly in the spectral parameter $\xi$ from values in
$\mathbb{R}^m_{\rm{reg}}$ to values in $\mathbb{R}^m$.
\end{remark}

\subsection{Proof of Theorem \ref{BA:thm}}
In \cite{die-ems-zur:completeness}, 
Propositions 8.4, 9.2 and (the proof of) Theorem 9.5 imply that the hyperoctahedral Hall-Littlewood polynomials satisfy the following  recurrence for $\lambda\in\Lambda^{(m,n)}$:
\begin{align}\label{rec-rel}
E(\xi) R_\lambda (\xi)
=& \bigl( \beta_+ (1-q^{\mathrm{m}_0(\lambda)}) +
\beta_- (1-q^{\mathrm{m}_n(\lambda)}) \bigr)
 R_\lambda (\xi) \\
+&\sum_{\substack{1\leq j \leq m\\ \lambda+e_j\in\Lambda^{(m,n)}}}
(1-\alpha_+ q^{\mathrm{m}_0(\lambda)-1})^{\delta_{\lambda_j} }
 (1-q^{\mathrm{m}_{\lambda_j}(\lambda)}) R_{\lambda+e_j} (\xi)  \nonumber \\
+&\sum_{\substack{1\leq j \leq m\\ \lambda-e_j\in\Lambda^{(m,n)}}}
(1-\alpha_-q^{\mathrm{m}_n(\lambda)-1})^{\delta_{n-\lambda_j} }
(1-q^{ \mathrm{m}_{\lambda_j}(\lambda)})  R_{\lambda-e_j} (\xi) ,  \nonumber
\end{align}
which is on-shell in the sense that the relation holds provided $\xi=(\xi_1,\ldots,\xi_m)$ satisfies the Bethe Ansatz equations \eqref{BAE}.
Substituting $\lambda=\mu^\prime$ with $\mu\in\Lambda^{(n,m)}$ leads us via Eq. \eqref{conjugate} to the recurrence relation
\begin{align*}
E(\xi) R_{\mu^\prime} (\xi)
=& \bigl( \beta_+ (1-q^{m-\mu_1}) +
\beta_- (1-q^{\mu_n}) \bigr)
 R_{\mu^\prime} (\xi) \\
+&\sum_{\substack{1\leq j \leq m\\ \mu^\prime+e_j\in\Lambda^{(m,n)}}}
(1-\alpha_+ q^{m-\mu_1-1})^{\delta_{\mu^\prime_j} }
 (1-q^{\mathrm{m}_{\mu^\prime_j}(\mu^\prime)}) R_{\mu^\prime +e_j} (\xi)  \nonumber \\
+&\sum_{\substack{1\leq j \leq m\\ \mu^\prime-e_j\in\Lambda^{(m,n,)}}}
(1-\alpha_-q^{\mu_n-1})^{\delta_{n-\mu^\prime_j} }
(1-q^{ \mathrm{m}_{\mu^\prime_j}(\mu^\prime)})  R_{\mu^\prime-e_j} (\xi) .
\end{align*}
We now observe that for any  $\mu\in\Lambda^{(n,m)}$ and $j\in \{ 1,\ldots ,m\}$:
\begin{equation*}
\mu^\prime +e_j\in\Lambda^{(m,n)} \Longleftrightarrow \mu^\prime +e_j=(\mu+e_i)^\prime
\quad\text{with}\ i=\mu_j^\prime +1 \in \{ 1,\ldots,n \}
\end{equation*}
and
\begin{equation*}
\mu^\prime -e_j\in\Lambda^{(m,n)} \Longleftrightarrow \mu^\prime -e_j=(\mu-e_i)^\prime
\quad\text{with}\ i=\mu_j^\prime \in \{ 1,\ldots,n \} .
\end{equation*}
The recurrence of interest can thus be rewritten in the form 
\begin{align*}
E(\xi) R_{\mu^\prime} (\xi)
=& \bigl( \beta_+ (1-q^{m-\mu_1}) +
\beta_- (1-q^{\mu_n}) \bigr)
 R_{\mu^\prime} (\xi) \\
+&\sum_{\substack{1\leq i \leq n\\ \mu+e_i\in\Lambda^{(n,m)}}}
(1-\alpha_+ q^{m-\mu_1-1})^{\delta_{i-1} }
 (1-q^{\mu_{i-1}-\mu_i}) R_{(\mu+e_i)^\prime } (\xi)  \nonumber \\
+&\sum_{\substack{1\leq i \leq n\\ \mu -e_i\in\Lambda^{(n,m)}}}
(1-\alpha_-q^{\mu_n-1})^{\delta_{n-i} }
(1-q^{ \mu_i-\mu_{i+1}})  R_{(\mu-e_i)^\prime} (\xi) ,
\end{align*}
where we have employed  once more  Eq. \eqref{conjugate}.

\section{Spectral analysis}\label{sec4}

\subsection{Solutions for the Bethe Ansatz equations}
The following system of transcendental  equations provides a logarithmic form of the Bethe Ansatz equations in Theorem \ref{BA:thm}:
\begin{subequations}
\begin{align}\label{critical-eq}
2n\xi_j+v_{p_+}(\xi_j)& +v_{q_+}(\xi_j)+v_{p_-}(\xi_j)+v_{q_-}(\xi_j)\\
 &+\sum_{\substack{1\leq k\leq m\\k\neq j}} \Bigl( v_q(\xi_j+\xi_k)+ v_q(\xi_j-\xi_k)\Bigr)=2\pi \bigl(m+1-j+\kappa_j\bigr)  ,
 \nonumber
\end{align}
 with $j=1,\ldots, m$, $\kappa\in \Lambda^{(m,n)}$ and
\begin{equation}\label{v}
v_a( z) = 
\int_0^z \frac{ (1-a^2)\ \text{d}x}{1-2a\cos(x)+a^2}
=
{\rm i} \log
\biggl( \frac{1- ae^{{\rm i} z}}{e^{{\rm i}z} - a }  \biggr) \qquad (-1<a<1).
\end{equation}
\end{subequations}
Indeed, upon multiplying  Eq. \eqref{critical-eq}  by ${\rm i}$ ($=\sqrt{-1}$) and applying the exponential  function on  both sides it is readily  seen that any of its solutions
gives rise to a solution of the Bethe Ansatz equations \eqref{BAE}
 (where---recall---the boundary parameters $\alpha_\pm$, $\beta_\pm$ and $p_\pm$, $q_\pm$ are related via Eq. \eqref{bp:a}).

For any $\kappa\in\Lambda^{(m,n)}$, the system in Eqs. \eqref{critical-eq}, \eqref{v} describes the critical point of a Yang-Yang type Morse function:
{\small
\begin{align}\label{Morse}
&V^{}_\kappa(\xi_1,\ldots,\xi_m)= \sum_{1\le j < k \le m }
\left(
    \int_0^{\xi_j+\xi_k} v_q(x)\text{d}x+   \int_0^{\xi_j-\xi_k} v_q(x)\text{d}x
\right) +  \\
&
\sum_{1\leq j\leq m}  \left(
n\xi_j^2-2\pi \bigl( m+1-j+\kappa_j\bigr) \xi_j 
+   \int^{\xi_j}_0 \bigl(v_{p_+}(x)+v_{q_+}(x)+v_{p_-}(x)+v_{q_-}(x)\bigr)\text{d}x\right)  .\nonumber
\end{align}
}
The function $V^{}_\kappa(\xi_1,\ldots,\xi_m)$ belongs to a wider class of smooth, strictly convex and radially unbounded Morse functions studied in \cite[Section 3]{die-ems:solutions}. 
The upshot is that via a Yang-Yang type analysis, one arrives at $\binom{m+n}{n}$ solutions for the Bethe Ansatz  equations given by the respective minima
of $V_\kappa (\xi_,\ldots,\xi_m)$, $\kappa\in\Lambda^{(m,n)}$ (cf. \cite[Remark 3.5]{die-ems-zur:completeness}).

\begin{proposition}[Solutions for the Bethe Ansatz Equations]\label{BAE-sol:prp}
Let $q\in (-1,1)\setminus \{0\}$ and let the boundary parameters $\alpha_\pm$, $\beta_\pm$ belong to the domain specified in Eqs. \eqref{bp:a}, \eqref{bp:b}.

(i) For any $\kappa\in\Lambda^{(m,n)}$, the logarithmic form of the Bethe Ansatz equations in Eqs. \eqref{critical-eq}, \eqref{v}
has a unique solution $\boxed{\xi_\kappa\in\mathbb{R}^m}$ given by the global minimum of the strictly convex radially unbounded Morse function $V_\kappa (\xi_1,\ldots \xi_m)$ \eqref{Morse}.

(ii) The global minima $\xi_\kappa$, $\kappa\in\Lambda^{(m,n)}$ in part (i) are all distinct and located within the open alcove
\begin{subequations}
\begin{equation}\label{alcove}
 \mathbb{A}^{m}=\{ (\xi_1,\xi_2,\ldots,\xi_m)\in\mathbb{R}^m\mid \pi>\xi_1>\xi_2>\cdots >\xi_m>0\}  \subset
 \mathbb{R}^m_{\rm{reg}} .
\end{equation}
Moreover, at a global minimum $\xi=\xi_\kappa$ the following estimates are fulfilled:
\begin{equation}\label{bounds:a}
\frac{\pi(m+1-j+\kappa_j)}{n+\textsc{k}_+} \leq \xi_j \leq \frac{\pi(m+1-j+\kappa_j)}{n+\textsc{k}_-}
\end{equation}
(for $1\leq j\leq m$),  and
\begin{equation}\label{bounds:b}
\frac{\pi(k-j+\kappa_j-\kappa_k)}{n+\textsc{k}_+} \leq \xi_j-\xi_k\leq \frac{\pi(k-j+\kappa_j-\kappa_k)}{n+\textsc{k}_-} 
\end{equation}
(for $1 \le  j < k \le m$), where
\begin{align}
\textsc{k}_\pm = & (m-1)\left( \frac{1+|q|}{1 - |q|}\right)^{\pm 1} + \\
&\frac{1}{2}\left(  \left(\frac{1+|p_+|}{1-|p_+|}\right)^{\pm 1}+\left(\frac{1+|q_+|}{1- |q_+| }\right)^{\pm 1}+ \left(\frac{1+|p_- |}{1- |p_-|}\right)^{\pm 1}+\left(\frac{1-|q_-|}{1- |q_-|} \right)^{\pm 1}\right).
\nonumber
\end{align}
\end{subequations}
\end{proposition}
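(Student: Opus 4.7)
My plan is to follow the variational Yang-Yang strategy developed in \cite{die-ems:solutions}: I verify that $V_\kappa$ fits that framework, whereupon its unique global minimum exists, solves the critical-point equations \eqref{critical-eq}, and satisfies the required bounds. Smoothness of $V_\kappa$ is immediate since $|p_\pm|,|q_\pm|,|q|<1$ ensures every $v_a$ is smooth. For strict convexity, the Hessian of $V_\kappa$ decomposes as a positive diagonal matrix with entries $2n+v_{p_+}'(\xi_j)+v_{q_+}'(\xi_j)+v_{p_-}'(\xi_j)+v_{q_-}'(\xi_j)$ plus, for each pair $j<k$, a positive semidefinite $2\times 2$ block with eigenvalues $2v_q'(\xi_j+\xi_k)$ and $2v_q'(\xi_j-\xi_k)$; all of these are positive because $v_a'(z)=(1-a^2)/(1-2a\cos z+a^2)>0$ for $|a|<1$. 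Radial unboundedness follows from the $n\xi_j^2$ terms dominating the at-most-linear contributions of the $v_a$ integrals. Part (i) is then the standard consequence: a strictly convex, coercive function on $\mathbb{R}^m$ has a unique critical point equal to the global minimum, and a direct computation verifies that $\nabla V_\kappa(\xi)=0$ amounts to \eqref{critical-eq}.

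The bounds in part (ii) rest on the elementary derivative estimate
\[
A_-(a):=\frac{1-|a|}{1+|a|}\leq v_a'(z)\leq \frac{1+|a|}{1-|a|}=:A_+(a),
\]
obtained by extremizing $1-2a\cos z+a^2$ over $z$. Assuming the alcove ordering, evenness of $v_q'$ yields the identity
\[
v_q(\xi_j+\xi_k)+v_q(\xi_j-\xi_k)=\int_{-(\xi_j-\xi_k)}^{\xi_j+\xi_k}v_q'(x)\,\mathrm{d}x,
\]
an integral over an interval of length $2\xi_j$, which therefore lies in $[2A_-(q)\xi_j,2A_+(q)\xi_j]$. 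Summing over $k\neq j$ produces the factor $(m-1)A_\pm(q)$ in $\textsc{k}_\pm$; adding the boundary contributions $v_a(\xi_j)\in[A_-(a)\xi_j,A_+(a)\xi_j]$ for $a\in\{p_+,q_+,p_-,q_-\}$ contributes the half-sum in $\textsc{k}_\pm$. The $j$-th equation of \eqref{critical-eq} then collapses to $2(n+\textsc{k}_-)\xi_j\leq 2\pi(m+1-j+\kappa_j)\leq 2(n+\textsc{k}_+)\xi_j$, which rearranges to \eqref{bounds:a}. Subtracting the equations for indices $j$ and $k>j$ and carrying out an analogous interval-integral reorganization (now over intervals of length $2(\xi_j-\xi_k)$) yields \eqref{bounds:b}.

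Positivity $\xi_m>0$ and the ordering $\xi_j>\xi_k$ for $j<k$ are immediate from the lower bounds, using $m+1-j+\kappa_j\geq 1$ and $k-j+\kappa_j-\kappa_k\geq k-j\geq 1$ (since $\kappa$ is a partition). Distinctness of $\{\xi_\kappa\}_{\kappa\in\Lambda^{(m,n)}}$ follows because distinct $\kappa$ change the linear term $-2\pi(m+1-j+\kappa_j)\xi_j$ of $V_\kappa$, and each $V_\kappa$ has only one critical point.

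The main obstacle is the upper bound $\xi_1<\pi$, which is not implied by \eqref{bounds:a} when $|q|$ is close to $1$ (the right-hand side $\pi(m+n)/(n+\textsc{k}_-)$ can then exceed $\pi$ because $\textsc{k}_-$ becomes small). Establishing it requires the deeper analysis of the general framework of \cite[Section 3]{die-ems:solutions}, which I would invoke directly; this also legitimates a posteriori the alcove-ordering assumption used above.
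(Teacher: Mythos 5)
Your proposal is correct and follows essentially the same route as the paper: the paper's proof is a one-line citation of \cite[Propositions 3.1 and 3.2]{die-ems:solutions}, and you reconstruct precisely that convex Yang--Yang argument (Hessian decomposition into a positive diagonal plus pairwise positive-definite $2\times 2$ blocks, coercivity from the quadratic term, and the derivative bounds $A_\pm(a)$ turning the critical equations into the two-sided estimates). You also correctly identify the two points that still genuinely require the cited framework rather than your estimates alone --- the upper bound $\xi_1<\pi$ and the a priori alcove ordering needed to orient the integration intervals before the bounds are available (the latter circularity is in fact removable by a direct sign argument: if $\xi_j\leq 0$ the left-hand side of the $j$th critical equation is nonpositive while the right-hand side is at least $2\pi$) --- so your deferral to \cite{die-ems:solutions} there matches what the paper itself does.
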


\begin{proof}
The assertions of this proposition follow by applying \cite[Propositions 3.1 and 3.2]{die-ems:solutions} to the Bethe Ansatz equations of Theorem \ref{BA:thm}  (cf. \cite[Remark 3.5]{die-ems-zur:completeness}).
\end{proof}

\begin{remark}\label{gradient:rem}
From a mostly academic perspective, Proposition \ref{BAE-sol:prp} invites us to compute $\xi_\kappa$ via the gradient flow of the pertinent Morse function:
\begin{equation}
\frac{\text{d}\xi_j}{\text{d}t}+\partial_{\xi_j} V_\kappa (\xi_1,\ldots ,\xi_m)=0, \qquad j=1,\ldots ,m,
\end{equation}
which gives rise to the following system of differential equations (cf. Eq. \eqref{critical-eq})
\begin{align}\label{g-flow}
\frac{\text{d}\xi_j}{\text{d}t}+2n\xi_j+v_{p_+}(\xi_j)& +v_{q_+}(\xi_j)+v_{p_-}(\xi_j)+v_{q_-}(\xi_j)\\
 &+\sum_{\substack{1\leq k\leq m\\k\neq j}} \Bigl( v_q(\xi_j+\xi_k)+ v_q(\xi_j-\xi_k)\Bigr)=2\pi \bigl(m+1-j+\kappa_j\bigr)  ,
 \nonumber
\end{align}
$j=1,\ldots ,m$.  For $n=0$ and $\kappa =(0^m)$, the corresponding gradient flow was analyzed in \cite{die:gradient}
and seen to converge exponentially fast to the roots of the Askey-Wilson polynomial $p_{m}(\cos \vartheta  ; p_+,q_+p_-, q_-| \rm{q}) $ \cite[Equation (14.1.1)]{koe-les-swa:hypergeometric} located within the interval  of orthogonality $0<\vartheta<\pi$.
A minor variation of \cite[Theorem 2]{die:gradient} reveals that in our present setting the equilibrium $\xi_\kappa$ of the gradient system in Eq.  \eqref{g-flow} remains globally exponentially stable, i.e. for any initial condition $\xi_\kappa (0)$ the unique solution $\xi_\kappa (t)$, $t\geq 0$ of the gradient system converges exponentially fast to the equilibrium $\xi_\kappa$.
More specifically, by slightly adapting the analisis in \cite[Section 4]{die:gradient} one readily deduces that for any $0<\varepsilon <2(n+\textsc{k}_-)$ ($=$ a lower bound for the eigenvalues of the hessian of $V_\kappa(\xi_1,\ldots,\xi_m)$), there exists a constant $C_\varepsilon>0$  such that
\begin{equation}\label{estimate}
\forall t\geq 0:\quad \|\xi_\kappa (t)-\xi_\kappa\|_\infty     \leq  C_\varepsilon e^{-\varepsilon t} ,
\end{equation}
where $\| \xi \|_\infty\equiv \max_{1\leq j\leq m} |\xi_j|$.   Apart from $\varepsilon$, the actual value of the constant $C_\varepsilon$  in the uniform estimate of the error term will depend on the choice of the initial condition $\xi_\kappa (0)\in\mathbb{R}^m$, as well as on
$\kappa\in\Lambda^{(m,n)}$ and $q,p_\pm,q_\pm\in (-1,1)$ (cf. Remark \ref{smooth:rem} below).  Notice that the $q$-difference Toda hamiltonian $H$ \eqref{Ht} degenerates to a discrete Laplacian on $\Lambda^{(n,m)}$ in the symplectic Schur limit $\alpha_\pm,\beta_\pm, q\to 0$.   At this elementary point in the parameters space, one has that
(cf. Eq. \eqref{bounds:a})
\begin{equation*}
{\textstyle \xi_\kappa \to \left(  \frac{\pi (m+\kappa_1) }{m+n+1} ,   \frac{\pi (m-1+\kappa_2) }{m+n+1} ,\ldots , \frac{\pi (m+1-j+\kappa_j) }{m+n+1} ,\ldots, \frac{\pi (1+\kappa_m) }{m+n+1} \right) } ,
\end{equation*}
which serves as a convenient initial condition for the gradient flow \eqref{g-flow}. Indeed, at this  particular value of the spectral parameter the bounds in Eqs. \eqref{bounds:a} and \eqref{bounds:b} are fulfilled  for any $p_\pm,q_\pm, q\in (-1,1)$.
\end{remark}

\subsection{Spectrum and eigenbasis}
By combining Theorem \ref{BA:thm} and Proposition \ref{BAE-sol:prp},  an eigenbasis  of Bethe Ansatz wave functions  for the $q$-difference Toda hamiltonian $H$ \eqref{Ht} in the Hilbert space  $\ell (\Lambda^{(n,m)},\Delta)$  is found together with the corresponding eigenvalues.

\begin{theorem}[Spectrum and Eigenbasis]\label{diagonalization:thm}
Let $q\in (-1,1)\setminus \{0\}$ and let the boundary parameters $\alpha_\pm$, $\beta_\pm$ belong to the domain specified in Eqs. \eqref{bp:a}, \eqref{bp:b}. For any $\xi\in\mathbb{R}^m_{\rm reg}$ and
$\kappa\in\Lambda^{(m,n)}$,  the function $\psi_\xi :\Lambda^{(n,m)}\to \mathbb{R}$  refers to the Hall-Littlewood Bethe Ansatz wave function from Theorem \ref{BA:thm} and
 $\xi_\kappa\in\mathbb{A}^m\subset \mathbb{R}^m_{\rm reg}$ denotes the unique global minimum of $V_\kappa (\xi_1,\ldots ,\xi_m)$ \eqref{Morse}
detailed in Proposition \ref{BAE-sol:prp}.

(i) The spectrum of the $q$-difference Toda hamiltonian $H$ \eqref{Ht} in the Hilbert space $\ell (\Lambda^{(n,m)},\Delta)$ consists of the eigenvalues $E(\xi_\kappa)$, $\kappa\in\Lambda^{(m,n)}$, where
$E(\xi)$ is given by Eq. \eqref{ev-eq}.

(ii) The corresponding Bethe Ansatz wave functions $\psi_{\xi_{\kappa}}$, $\kappa\in\Lambda^{(m,n)}$ constitute an eigenbasis for
 $H$ \eqref{Ht}  in $\ell (\Lambda^{(n,m)},\Delta)$ such that
\begin{equation}\label{diagonal}
H\psi_{\xi_\kappa}= E(\xi_\kappa) \psi_{\xi_\kappa}\qquad (\kappa\in\Lambda^{(m,n)}).
\end{equation}
\end{theorem}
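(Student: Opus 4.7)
The plan is to deduce both assertions by combining the eigenvalue identity of Theorem \ref{BA:thm}, the explicit production of solutions of the Bethe Ansatz equations in Proposition \ref{BAE-sol:prp}, and a dimension/completeness argument inherited from the $q$-boson model in \cite{die-ems-zur:completeness}.

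First, for each $\kappa \in \Lambda^{(m,n)}$, Proposition \ref{BAE-sol:prp}(i) furnishes a unique $\xi_\kappa \in \mathbb{A}^m \subset \mathbb{R}^m_{\rm reg}$ that solves the Bethe Ansatz equations \eqref{BAE}. Substituting $\xi = \xi_\kappa$ into Theorem \ref{BA:thm} immediately yields $H\psi_{\xi_\kappa} = E(\xi_\kappa)\psi_{\xi_\kappa}$, which is exactly the relation \eqref{diagonal}. This produces a family of $|\Lambda^{(m,n)}| = \binom{m+n}{n}$ eigenfunctions, matching the dimension of $\ell^2(\Lambda^{(n,m)},\Delta)$. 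Both parts (i) and (ii) therefore reduce to showing that $\{\psi_{\xi_\kappa}\}_{\kappa\in\Lambda^{(m,n)}}$ is linearly independent: by dimension count, such a family must then span, so it constitutes an eigenbasis and simultaneously exhausts the spectrum.

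The central step is thus to establish linear independence. The natural route is to transfer the completeness of the analogous Hall-Littlewood eigenbasis for the open $q$-boson model from \cite{die-ems-zur:completeness} through the Toda--$q$-boson correspondence exploited in Section \ref{sec3}. Under the conjugate-partition bijection $\mu \leftrightarrow \mu'$ of \eqref{conjugate}, the $q$-difference Toda wave function $\psi_{\xi_\kappa}(\mu) = R_{\mu'}(\xi_\kappa)$ is identified with the $q$-boson eigenfunction at position $\mu'$ and spectral parameter $\xi_\kappa$, so independence on the $q$-boson side pulls back to independence on the present side. The main obstacle is making this transfer faithful: one must verify that the weight $\Delta_\mu$ \eqref{weights} on $\Lambda^{(n,m)}$ corresponds, under $\mu \mapsto \mu'$, to the $q$-boson weight on $\Lambda^{(m,n)}$ used in \cite{die-ems-zur:completeness}, so that $\ell^2$-norms and orthogonality statements translate consistently across the bijection. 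Once this bookkeeping is in place, self-adjointness of $H$ (Proposition \ref{sa:prp}) automatically upgrades the resulting linearly independent family of eigenfunctions to an orthogonal eigenbasis, completing the proof of both (i) and (ii).
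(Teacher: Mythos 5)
Your overall strategy coincides with the paper's: the eigenvalue relation \eqref{diagonal} follows by evaluating Theorem \ref{BA:thm} at the Bethe roots $\xi_\kappa\in\mathbb{A}^m\subset\mathbb{R}^m_{\rm reg}$ supplied by Proposition \ref{BAE-sol:prp}, and both parts then reduce, by the dimension count $|\Lambda^{(m,n)}|=\binom{m+n}{n}=\dim \ell^2(\Lambda^{(n,m)},\Delta)$, to the linear independence of the wave functions $\psi_{\xi_\kappa}$, which the paper likewise imports from the $q$-boson side (the second part of \cite[Theorem 3.1]{die-ems-zur:completeness} states precisely that the $R_{\mu'}$ are linearly independent as functions on the Bethe spectrum $\{\xi_\kappa\mid\kappa\in\Lambda^{(m,n)}\}$). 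One remark on execution: the ``main obstacle'' you identify --- matching $\Delta_\mu$ with the $q$-boson weight under $\mu\mapsto\mu'$ --- is a red herring for this theorem. Linear independence is the nonsingularity of the square matrix $\bigl[R_{\mu'}(\xi_\kappa)\bigr]_{\mu\in\Lambda^{(n,m)},\,\kappa\in\Lambda^{(m,n)}}$, a purely algebraic statement in which no inner product enters; the identification $\Delta'_{\mu'}=\Delta_\mu$ appears in the paper only in connection with orthogonality, not completeness.

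The one genuine error is your closing claim that self-adjointness ``automatically upgrades'' the family to an \emph{orthogonal} eigenbasis. Proposition \ref{sa:prp} only yields $\langle\psi_{\xi_\kappa},\psi_{\xi_\nu}\rangle_\Delta=0$ when $E(\xi_\kappa)\neq E(\xi_\nu)$, and nothing in the argument excludes degenerate eigenvalues. The paper is explicitly careful on this point: Theorem \ref{diagonalization:thm} claims only an eigenbasis, and Remark \ref{orthogonality:rem} records that unconditional orthogonality has so far been verified only in four special parameter subdomains. Since orthogonality is not part of the statement, this does not invalidate your proof of the theorem itself, but the claim should be deleted or conditioned on nondegeneracy of the spectrum. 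Finally, the paper also notes the nonvanishing $\psi_\xi(0^n)=(\alpha_+;q)_m(q;q)_m/(1-q)^m\neq 0$ (Macdonald's Poincar\'e series) to see directly that each $\psi_{\xi_\kappa}$ is a proper eigenfunction; in your write-up this is subsumed by linear independence, which is acceptable.
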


\begin{proof}
It is clear from Theorem \ref{BA:thm} and Proposition \ref{BAE-sol:prp} that for any $\kappa\in\Lambda^{(m,n)}$  the Bethe Ansatz wave function
$\psi_{\xi_{\kappa}}$ solves the eigenvalue equation \eqref{diagonal}. Moreover, the value of the Bethe Ansatz wave functions at the origin is given by Macdonald's three-parameter Poincaré series for the root system
$BC_m$ \cite[\S 10]{mac:orthogonal}:
\begin{align*}
\psi_{\xi}(0^n)&= R_{(0^m)}(\xi_1,\ldots,\xi_m)=
 \sum_{\substack{ \sigma\in S_m \\ \epsilon\in \{ 1,-1\}^m}}  C(\epsilon_1 \xi_{\sigma (1)},\ldots , \epsilon_m \xi_{\sigma (m)}) \\
&=
\frac{(\alpha_+;q)_m (q;q)_m}{(1-q)^m} \neq 0
\end{align*}
(cf. also \cite[Remark 3.4]{die-ems-zur:completeness}). Hence, for any $\kappa\in\Lambda^{(m,n)}$ the wave function
$\psi_{\xi_{\kappa}}$ constitutes a proper (i.e. nontrivial) eigenfunction of the $q$-difference Toda hamiltonian $H$ with eigenvalue $E(\xi_\kappa)$. To confirm the completeness of the Bethe Ansatz it remains to check that the wave functions in question indeed form a basis for  $\ell (\Lambda^{(n,m)},\Delta)$, or equivalently, that the hyperoctahedral Hall-Littlewood polynomials $R_{\mu^\prime}(\xi_1,\ldots,\xi_m)$, $\mu\in\Lambda^{(n,m)}$ are linearly independent as functions on the Bethe spectrum   $\{ \xi_\kappa\mid \kappa\in\Lambda^{(m,n)}\}$.
This  independence  is immediate from the second part of  \cite[Theorem 3.1]{die-ems-zur:completeness}.
\end{proof}

\begin{remark}\label{smooth:rem}
The $q$-difference Toda hamiltonian $H$, the positive weights $\Delta_\mu$, and the Bethe Ansatz wave function $\psi_\xi$ clearly extend
 smoothly in the parameters to the domain
$q,p_\pm, q_\pm\in (-1,1)$.   With the aid of the implicit function theorem, it is seen that the same is true for the solutions  $\xi_\kappa$, $\kappa\in\Lambda^{(n,m)}$ of the (logarithmic)  Bethe Ansatz equation in Proposition \ref{BAE-sol:prp} (cf. \cite[Remark 3.6]{die-ems-zur:completeness}).
Indeed, the Morse function $V_\kappa (\xi_1,\ldots,\xi_m)$ extends smoothly in the parameters and remains strictly convex.
 Hence, for
$\kappa\in\Lambda^{(m,n)}$ the Bethe Ansatz wave function $\psi_{\xi_{\kappa}}$ constitutes in fact 
 an eigenfunction of $H$ \eqref{Ht} in  $\ell (\Lambda^{(n,m)},\Delta)$ with eigenvalue $E(\xi_\kappa)$ \eqref{ev-eq}  for any $q,p_\pm, q_\pm\in (-1,1)$ (i.e. even if one or more of the parameters in question vanish).
\end{remark}

\begin{remark}\label{orthogonality:rem}
The self-adjointness of $H$ \eqref{Ht} in Proposition \ref{sa:prp}  implies that
\begin{subequations}
\begin{equation}\label{orthogonality}
\forall \kappa,\nu\in\Lambda^{(m,n)}:\qquad \langle \psi_{\xi_\kappa}, \psi_{\xi_\nu}\rangle_\Delta = 0\quad \text{if}\ \kappa\neq \nu,
\end{equation}
provided $E(\xi_{\kappa})\neq E(\xi_{\nu})$.  Rewritten in terms of  the (real-valued) hyperoctahedral Hall-Littlewood polynomials 
$R_{\lambda} (\xi)=R_{\lambda} (\xi_1,\ldots ,\xi_m)$ one obtains that in this situation:
\begin{equation}\label{R-orthogonality}
 \sum_{\lambda \in\Lambda^{(m,n)} } R_{\lambda} (\xi_\kappa) R_{\lambda} (\xi_\nu ) \Delta^{\prime}_{\lambda } = 0 \quad \text{if}\ \kappa\neq \nu ,
\end{equation}
with
\begin{equation}\label{weights-conjugate}
\Delta^{\prime}_{\lambda } =
\frac{(q;q)_m}{(\alpha_+;q)_{\mathrm{m}_0(\lambda)} (\alpha_-;q)_{\mathrm{m}_n(\lambda)}  
\prod_{0\leq i\leq n} (q;q)_{\mathrm{m}_i (\lambda)} } 
\end{equation}
(so $\Delta^{\prime}_{\mu^\prime }  =\Delta_\mu$ for $\mu\in\Lambda^{(n,m)}$).
\end{subequations}
To date the latter orthogonality relation has been checked directly without the proviso regarding the nondegneracy of the corresponding eigenvalues of $H$ in the following four cases:
\begin{itemize}
\item[(a)]  if $q=0$ and $p_\pm,q_\pm\in (-1,1)$, cf. \cite[Theorem 3.1]{die-ems:discrete};
\item[(b)] if $0<q<1$, $\alpha_\pm =0$ and $\beta_\pm\in (-1,1)$, cf. \cite[Section 11.4]{die-ems:orthogonality};
\item[(c)] if $q,p_\pm, q_\pm\in (-1,1)$ and $n\geq 2m$, cf. \cite[Theorem 4.2]{die:harmonic};
\item[(d)] if $q,p_\pm, q_\pm\in (-1,1)$ and $n=1$, cf. \cite[Section 5.7]{die:harmonic}.
\end{itemize}
In view of Remark \ref{smooth:rem},  within these four subdomains the statements concerning the spectrum and completeness formulated in parts (i) and (ii) of Theorem \ref{diagonalization:thm} therefore persist with the corresponding Bethe Ansatz eigenbasis being orthogonal in  $\ell (\Lambda^{(n,m)},\Delta)$ (as expected).
\end{remark}

\begin{remark}
In \cite{die-ems-zur:completeness}, Eq. \eqref{rec-rel} is interpreted as the eigenvalue equation for a hamiltonian of an $m$-particle $q$-boson model on the lattice $\{ 0,1,\ldots ,n\}$.
The quantum integrability of this $m$-particle $q$-boson hamiltonian, which is thus given explicitly by the difference operator acting at the RHS of Eq. \eqref{rec-rel}, was established for $\alpha_+=\alpha_-=0$ in \cite{die-ems:orthogonality} (using the quantum inverse scattering method) and for general boundary parameters in  \cite[Section 8]{die-ems-zur:completeness}  (using representations of the double affine Hecke algebra of type $C^\vee C$ at the critical level $\rm{q}=0$).  As detailed explicitly for the hamiltonian in the proof of Theorem \ref{BA:thm}, the mapping $\mu\to\mu^\prime$ from $\Lambda^{(n,m)}$ onto $\Lambda^{(m,n)}$ allows us to pull back the commuting quantum integrals for the $q$-boson model from
$\ell(\Lambda^{(m,n)},\Delta^\prime)$ to $\ell(\Lambda^{(n,m)},\Delta)$. 
This maps the commuting quantum integrals in question to an algebra of  commuting difference operators in  $\ell(\Lambda^{(n,m)},\Delta)$ containing $H$ \eqref{Ht}.
Since \cite[Theorem 9.5]{die-ems-zur:completeness} guarantees that the latter algebra of commuting difference operators is (Harish-Chandra-)isomorphic to the algebra of complex functions
on the joint spectrum  $\{ \xi_\kappa\mid \kappa\in\Lambda^{(m,n)}\}\subset \mathbb{A}^m$, this establishes the quantum integrability of our $q$-difference Toda hamiltonian  $H$ \eqref{Ht}.
\end{remark}

\section{The limit $q\to 1$}\label{sec5}

\subsection{Quantum hamiltonian}\label{sec5.1}
Upon dividing out an overall scaling factor $1-q$,  the $q$-difference Toda hamiltonian $H$ \eqref{Ht} 
degenerates for
$q\to 1$ to an elementary difference operator $\tilde{H}$ with linear coefficients:
 \begin{align}\label{H-r}
 \tilde{H} &= \beta_+ (m-\mu_1)+\beta_-(\mu_n ) + \\
 &  \sum_{1\leq  i\leq n} \Bigl( (1-\alpha_+)^{\delta_{i-1}}(\mu_{i-1}-\mu_i)T_i  +(1-\alpha_- )^{\delta_{n-i}}(\mu_i-\mu_{i+1})T_i^{-1} \Bigr)  .  \nonumber
 \end{align}
From Proposition \ref{sa:prp} it follows that---assuming $\alpha_\pm\in (-1,1)$ and $\beta_\pm\in\mathbb{R}$---this limiting quantum hamiltonian is self-adjoint in a Hilbert space
$\ell (\Lambda^{(n,m)},\tilde{\Delta})$ governed by the weights of
a two-parameter multinomial distribution on the partitions $\Lambda^{(n,m)}$:
\begin{subequations}
\begin{align}\label{weights-r}
\tilde\Delta _\mu &=\lim_{q\to 1} \Delta_\mu \\
&=
\frac{m!}{(1-\alpha_+)^{m-\mu_1}   (1-\alpha_-)^{\mu_n } 
\prod_{0\leq i\leq n} (\mu_i-\mu_{i+1})! } \nonumber \\
&= \mathcal{N }\cdot
 \frac{  m!  \prod_{0\leq i \leq n} \rho_i^{\mu_i-\mu_{i+1}}}{
\prod_{0\leq i\leq n} (\mu_i-\mu_{i+1})! }  , \nonumber
\end{align}
with
\begin{equation}
\rho_i=
\begin{cases}
\frac{(1-\alpha_+)^{-1}}{(n-1)+(1-\alpha_+)^{-1} +(1-\alpha_-)^{-1}}&\text{if}\ i=0,\\[1ex]
\frac{1}{(n-1)+(1-\alpha_+)^{-1} +(1-\alpha_-)^{-1}}&\text{if}\ 0< i<n,
\\[1ex]
\frac{(1-\alpha_-)^{-1}}{(n-1)+(1-\alpha_+)^{-1} +(1-\alpha_-)^{-1}}&\text{if}\ i=n,
\end{cases}
\end{equation}
and
\begin{equation}\label{norm-r}
\mathcal{N }=\sum_{\mu\in\Lambda^{(n,m)}}\tilde\Delta_\mu= \bigl((n-1)+(1-\alpha_+)^{-1} +(1-\alpha_-)^{-1}\bigr)^m .
\end{equation}
\end{subequations}
Notice in particular that $\rho_0=\rho_2=\cdots =\rho_n=\frac{1}{n+1}$ if $\alpha_+=\alpha_-=0$.

\subsection{Bethe Ansatz}
The Bethe Ansatz wave function $\psi_\xi$  \eqref{BA} degenerates in the limit $q\to 1$ to a wave function
$\tilde\psi_\xi:\Lambda^{(n,m)}\to\mathbb{C}$ with values 
\begin{subequations}
\begin{equation}\label{BAr}
\tilde\psi_\xi (\mu)=\tilde{R}_{\mu^\prime}(\xi_1,\ldots,\xi_m)
\end{equation}
that separate in terms of univariate ($BC_1$-type) Hall-Littlewood polynomials. Specifically, for any $\lambda\in\Lambda^{(m,n)}$ and $\xi\in\mathbb{R}^m_{\rm{reg}}$ one has  that
\begin{align}\label{HLr}
\tilde{R}_\lambda (\xi_1,\ldots,\xi_m)  &=   \lim_{q\to 1} R_\lambda (\xi_1,\ldots,\xi_m)\\
&= \sum_{\sigma\in S_m}  R_{\lambda_1} ( \xi_{\sigma (1)})  R_{\lambda_2}( \xi_{\sigma (2)}) \cdots  R_{\lambda_m}( \xi_{\sigma(m)})\nonumber
\end{align}
with
\begin{align}\label{HL1}
 R_{l}(\vartheta )=&\frac{(1- \beta_+ e^{-{\rm i}\vartheta}+\alpha_+ e^{-2{\rm i}\vartheta})}{1-e^{-2{\rm i}\vartheta}} \exp ({\rm i} l \vartheta ) +\\
 &
 \frac{(1- \beta_+ e^{{\rm i}\vartheta }+\alpha_+ e^{2{\rm i}\vartheta })}{1-e^{2{\rm i}\vartheta}} \exp (-{\rm i} l \vartheta ) 
 \nonumber
\end{align}
($l\in \{ 0,\ldots, n\}$, $\vartheta\not\in \pi\mathbb{Z}$).
\end{subequations}

Heuristically, for the Bethe Ansatz wave function $ \tilde{\psi}_\xi$ \eqref{BAr}--\eqref{HL1} to solve the $q\to 1$ eigenvalue equation
\begin{subequations}
\begin{equation}\label{ev-eq-r}
\tilde{H} \tilde{\psi}_\xi= \tilde{E}(\xi) \tilde{\psi}_\xi\quad \text{with}\quad \tilde{E(}\xi)=2\sum_{1\leq j\leq m} \cos (\xi_j),
\end{equation}
one expects the spectral parameter $\xi $  to be required to satisfy the following decoupled system of Bethe Ansatz equations arising from Eq. \eqref{BAE} in the limit $q\to 1$:
\begin{equation}\label{BAE-r}
 e^{2 {\rm i} n \xi_j}
= \frac{ (1- \beta_+ e^{{\rm i}\xi_j}+\alpha_+ e^{2{\rm i}\xi_j})}{ (e^{2{\rm i}\xi_j}- \beta_+ e^{{\rm i}\xi_j}+\alpha_+ ) } 
  \frac{ (1- \beta_- e^{{\rm i}\xi_j}+\alpha_- e^{2{\rm i}\xi_j}) } {  (e^{2{\rm i}\xi_j}- \beta_- e^{{\rm i}\xi_j}+\alpha_- ) } ,
\end{equation}
$j=1,\ldots,m$.
\end{subequations}

\subsection{Solutions for the Bethe Ansatz equations}
It is  illuminating to emphasize that for $p_+,q_+,p_-,q_-\in (-1,1)$ the decoupled  Bethe Ansatz equations in Eq. \eqref{BAE-r} can be conveniently solved in terms of the roots of the Askey-Wilson polynomial
$p_{n+1}(\cos \vartheta  ; p_+,q_+p_-, q_-| \rm{q}) $ \cite[Equation (14.1.1)]{koe-les-swa:hypergeometric}  at $\rm{q}=0$. Indeed, it is clear from the orthogonality relation  \cite[Equation (14.1.2)]{koe-les-swa:hypergeometric}  that 
at $\rm{q}=0$ the Askey-Wilson polynomials fall within a well-known class of orthogonal polynomials studied by  Bernstein and Szegö \cite[Chapter 2.6]{sze:orthogonal}.  The classical theory of 
Bernstein and Szegö tells us, moreover, that the polynomials in question can be written explicitly as follows (cf. e.g. \cite[Section 4.3]{die-ems:quadrature}):
\begin{align}
 &\frac{p_{n+1}(\cos \vartheta  ; p_+,q_+p_-, q_-|0)}{ (p_+q_+p_-q_- q^n;q)_{n+1}}  = \\
 &\frac{\prod_{\epsilon=\pm}(1- p_\epsilon e^{-{\rm i}\vartheta } )(1- q_\epsilon e^{-{\rm i}\vartheta } )}{1-e^{-2{\rm i}\vartheta}} \exp ({\rm i}(n+1)\vartheta ) \nonumber \\
 &
+\frac{\prod_{\epsilon=\pm}(1- p_\epsilon e^{{\rm i}\vartheta } )(1- q_\epsilon e^{{\rm i}x\vartheta} )}{1-e^{2{\rm i}\vartheta}} \exp (-{\rm i}(n+1)\vartheta ) 
 \nonumber
\end{align}
($\vartheta\not\in \pi\mathbb{Z}$).
This explicit formula reveals in particular that the roots
\begin{equation}\label{roots}
0<\vartheta_0<\vartheta_1<\cdots <\vartheta_n<\pi
\end{equation}
of $p_{n+1}(\cos \vartheta  ; p_+,q_+p_-, q_-| 0 )$  solve the Bethe Ansatz equation in Eq. \eqref{BAE-r} (with $\xi_j$ replaced by $\vartheta$). More specifically, the root $\vartheta_\texttt{k}$ corresponds to the solution of the associated  logarithmic Bethe Ansatz equation  (cf. Remark \ref{gradient:rem} above)
\begin{equation}\label{critical-eq-r}
2n\vartheta +v_{p_+}(\vartheta) +v_{q_+}(\vartheta)+v_{p_-}(\vartheta)+v_{q_-}(\vartheta)
=2\pi ( \texttt{k}+1 )  ,
\end{equation}
with $\texttt{k}\in\{ 0,1,\ldots ,n\}$.

The upshot is that to any $\kappa=(\kappa_1,\ldots,\kappa_m)\in\Lambda^{(m,n)}$,  we can now attach a solution $\tilde\xi_\kappa$  of   the  decoupled system of Bethe Ansatz equations in Eq. \eqref{BAE-r}  by forming the following  vector of $\rm{q}=0$ Askey-Wilson roots:
\begin{equation}\label{BA-solr}
\tilde\xi_\kappa=(\vartheta_{\kappa_1},\vartheta_{\kappa_2},\ldots,\vartheta_{\kappa_m})\in \{\xi\in\mathbb{R}^m\mid \pi>\xi_1\geq\xi_2\geq\cdots\geq\xi_m>0\}  .
\end{equation}
Notice that $\tilde\xi_\kappa$ encodes the unique global minimum
of the decoupled Morse function
{\small
\begin{align}\label{q=1:Morse}
&\tilde{V}^{}_\kappa(\xi_1,\ldots,\xi_m)=\\
&\sum_{1\leq j\leq m}  \left(
n\xi_j^2-2\pi \bigl( \kappa_j+1\bigr) \xi_j 
+   \int^{\xi_j}_0 \bigl(v_{p_+}(x)+v_{q_+}(x)+v_{p_-}(x)+v_{q_-}(x)\bigr)\text{d}x\right)  .\nonumber
\end{align}
}

\subsection{Spectrum and eigenbasis for $q\to 1$}

When tying the above observations together, one is led to the following $q\to 1$ counterpart of Theorem \ref{diagonalization:thm}.

\begin{theorem}[Spectrum and Eigenbasis for $q\to 1$]\label{diagonalization-r:thm}
Let the boundary parameters $\alpha_\pm$, $\beta_\pm$ be of the form in Eq. \eqref{bp:a} with $p_\pm,q_\pm\in (-1,1)$. For any $\xi\in\{ \xi\in \mathbb{R}^m\mid \xi_j\not\in\pi\mathbb{Z},\, \forall 1\leq j\leq m\}$ and
$\kappa\in\Lambda^{(m,n)}$,  the function $\tilde{\psi}_\xi :\Lambda^{(n,m)}\to \mathbb{R}$  refers to the $q=1$ Hall-Littlewood Bethe Ansatz wave function in Eq. \eqref{BAr}--\eqref{HL1} and
 $\tilde\xi_\kappa$ denotes the solution in Eq. \eqref{BA-solr} of the decoupled Bethe Ansatz equations  \eqref{BAE-r}.

(i) The spectrum of $\tilde{H}$ \eqref{H-r} in the Hilbert space $\ell (\Lambda^{(n,m)},\tilde{\Delta})$ consists of the eigenvalues $\tilde{E}(\tilde\xi_\kappa)$, $\kappa\in\Lambda^{(m,n)}$, where
$\tilde{E}(\xi)$ is given by Eq. \eqref{ev-eq-r}.

(ii) The corresponding Bethe Ansatz wave functions $\tilde\psi_{\tilde\xi_{\kappa}}$, $\kappa\in\Lambda^{(m,n)}$ constitute an orthogonal eigenbasis for
 $\tilde{H}$ \eqref{H-r}  in $\ell (\Lambda^{(n,m)},\tilde\Delta)$ such that
\begin{equation}\label{diagonal-r}
\tilde{H}\tilde{\psi}_{\tilde\xi_\kappa}= \tilde{E}(\tilde\xi_\kappa) \tilde{\psi}_{\tilde\xi_\kappa}\qquad (\kappa\in\Lambda^{(m,n)}).
\end{equation}
\end{theorem}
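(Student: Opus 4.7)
The plan is to mirror the proof of Theorem \ref{diagonalization:thm} step by step, with each $q$-dependent ingredient replaced by its $q\to 1$ degeneration. I would not attempt to pass to the limit of the solutions $\xi_\kappa$ directly, because the Morse function \eqref{Morse} becomes singular in that limit (the cross-term factors $v_q$ no longer extend continuously and the index shift $m+1-j+\kappa_j$ has no counterpart in $\tilde V_\kappa$); instead each component of the argument is verified afresh, leaning on the decoupled structure that emerges in the limit.

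Step 1 (eigenvalue equation). The on-shell recurrence \eqref{rec-rel} underpinning Theorem \ref{BA:thm}, after division by $(1-q)$, admits a well-defined $q\to 1$ limit: $(1-q^k)/(1-q)\to k$, the prefactors $(1-\alpha_\pm q^{\,\cdot\,-1})$ collapse to $(1-\alpha_\pm)$, and $E(\xi)/(1-q)\to \tilde E(\xi)$. The premise that $\xi$ satisfies \eqref{BAE} degenerates to the premise that $\xi$ satisfies \eqref{BAE-r}, because each interaction factor $(1-q e^{\mathrm{i}(\xi_j\pm\xi_k)})/(e^{\mathrm{i}(\xi_j\pm\xi_k)}-q)$ tends to $-1$ and the $2(m-1)$ such factors attached to site $j$ multiply to unity. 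Pulling back along the conjugation $\lambda=\mu'$ precisely as in the proof of Theorem \ref{BA:thm} converts the limiting recurrence into $\tilde H\tilde\psi_\xi=\tilde E(\xi)\tilde\psi_\xi$; specialising to $\xi=\tilde\xi_\kappa$, which solves \eqref{BAE-r} by the Bernstein-Szegő construction of Section \ref{sec5}, then yields \eqref{diagonal-r}.

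Step 2 (non-triviality). Substituting $\mu=(0^n)$ into \eqref{BAr}--\eqref{HL1} and noting the immediate simplification $R_0(\vartheta)=1-\alpha_+$ of \eqref{HL1} at $l=0$, one reads off $\tilde\psi_\xi(0^n)=m!\,(1-\alpha_+)^m\neq 0$ for every $\xi$. Hence each $\tilde\psi_{\tilde\xi_\kappa}$ is a genuine (nontrivial) eigenfunction with eigenvalue $\tilde E(\tilde\xi_\kappa)$.

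Step 3 (orthogonality and completeness). Self-adjointness of $\tilde H$ in $\ell(\Lambda^{(n,m)},\tilde\Delta)$, inherited from Proposition \ref{sa:prp} via the smooth degeneration $\Delta_\mu\to\tilde\Delta_\mu$, produces orthogonality on eigenspaces with distinct eigenvalues. To cover possible accidental degeneracies and obtain orthogonality unconditionally, I would exploit the tensor-product structure of both the multinomial weight $\tilde\Delta$ and the permanent $\tilde R_{\mu'}=\sum_\sigma\prod_j R_{\lambda_j}(\xi_{\sigma(j)})$: after expanding and interchanging summations, $\langle\tilde\psi_{\tilde\xi_\kappa},\tilde\psi_{\tilde\xi_\nu}\rangle_{\tilde\Delta}$ rearranges into a sum of products of univariate discrete pairings at the Askey-Wilson roots $\vartheta_0,\dots,\vartheta_n$, each of which vanishes unless the corresponding pair of roots coincides; a combinatorial bookkeeping on the resulting multisets then forces $\kappa=\nu$. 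Combined with the dimension count $|\Lambda^{(m,n)}|=\binom{m+n}{n}=\dim\ell(\Lambda^{(n,m)},\tilde\Delta)$, this delivers the orthogonal eigenbasis. I expect this last disentangling to be the principal obstacle: the Morse-theoretic machinery of Proposition \ref{BAE-sol:prp} is unavailable in the degenerate limit, so one must unpack the symmetrisations by hand and rely on the classical Gauss-quadrature orthogonality of the one-variable Bernstein-Szegő polynomials associated with the weight $\prod_{\epsilon=\pm}(1-p_\epsilon e^{\mathrm{i}\vartheta})(1-q_\epsilon e^{\mathrm{i}\vartheta})/(e^{\mathrm{i}\vartheta}-e^{-\mathrm{i}\vartheta})$.
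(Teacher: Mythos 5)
Your Steps 2 and 3 are sound, and Step 3 is in fact the paper's own argument: the inner product factorizes over the permanent structure of $\tilde R_{\mu'}$ and the multinomial weight into products of univariate pairings at the $\mathrm{q}=0$ Askey--Wilson (Bernstein--Szeg\H{o}) roots, and these vanish unless the roots match. The genuine gap is in Step 1. The recurrence \eqref{rec-rel} is a \emph{conditional} statement --- it holds only for $\xi$ in the ($q$-dependent) solution set of \eqref{BAE} --- and a family of conditional statements does not pass to the limit unless you track the solution sets. To apply the limiting recurrence at $\xi=\tilde\xi_\kappa$ you would need a branch $\xi_\kappa(q)$ of genuine Bethe roots with $\xi_\kappa(q)\to\tilde\xi_\kappa$ as $q\to 1$, which is precisely the continuation you declare you will not carry out. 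Worse, your factor-by-factor computation of the limit of \eqref{BAE} is only valid on $\mathbb{R}^m_{\rm reg}$, whereas the points $\tilde\xi_\kappa=(\vartheta_{\kappa_1},\ldots,\vartheta_{\kappa_m})$ have repeated coordinates whenever $\kappa$ has repeated parts (e.g.\ $\kappa=(0^m)$); at such points the factor $(1-qe^{{\rm i}(\xi_j-\xi_k)})/(e^{{\rm i}(\xi_j-\xi_k)}-q)$ equals $1$ rather than tending to $-1$, so the interaction product does not collapse to unity in the way you assert, and the convergence of \eqref{BAE} to \eqref{BAE-r} is not uniform near the diagonal. So the eigenvalue equation \eqref{diagonal-r} is simply not established by your argument.

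The paper closes exactly this hole without any limit: for $m=1$ the recurrence \eqref{rec-rel-r-m=1} contains no $q$ at all, so it is literally the $m=1$ instance of the already-proved finite-$q$ eigenvalue equation (valid for all $p_\pm,q_\pm\in(-1,1)$ by Remark \ref{smooth:rem}); for $m>1$ one multiplies $m$ copies of the univariate relation at the roots $\vartheta_{\kappa_{\sigma(j)}}$ and symmetrizes over $\sigma\in S_m$ to obtain \eqref{rec-rel-r} --- the same decoupling device you already use for orthogonality. Replacing your Step 1 by this bootstrap (or, alternatively, rigorously continuing the Bethe roots in $q$ up to $q=1$, including their possible collision on the alcove walls) is required to complete the proof. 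Your Step 2 is a correct and slightly more explicit nontriviality check than the paper records at $q=1$ ($R_0(\vartheta)=1-\alpha_+$, hence $\tilde\psi_\xi(0^n)=m!\,(1-\alpha_+)^m\neq0$).
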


\begin{remark}
Systematic studies of orthogonal polynomials associated with the multinomial distribution give rise to multivariate generalizations of the 
Krawtchouk polynomials \cite{dia-gri:introduction,gen-vin-zhe:multivariate,gru-rah:system,ili:lie,ili-xu:hahn,miz-tan:hypergeometric}.  For the particular instance of the
two-parameter multinomial distribution in Eqs. \eqref{weights-r}--\eqref{norm-r}, Theorem \ref{diagonalization-r:thm} suggests an intriguing link to the $q=1$ Hall-Littlewood Bethe Ansatz wave function $\tilde\psi_{\tilde\xi_{\kappa}}$. When $n=1$ this link is actually well-understood in the literature as a relation between classical univariate Krawtchouk polynomials and elementary symmetric polynomials,
cf. e.g. \cite[Equation (5.14)]{die:q-deformation}.
\end{remark}

\subsection{Proof of Theorem \ref{diagonalization-r:thm}}
In order to establish the claims of the theorem in full rigor avoiding tricky formal limits, let  us first  check that the  Bethe Ansatz wave functions $\tilde\psi_{\tilde\xi_{\kappa}}$, $\kappa\in\Lambda^{(m,n)}$ are indeed orthogonal in the Hilbert space 
 $\ell (\Lambda^{(n,m)},\tilde\Delta)$ (i.e. with the weights $\tilde{\Delta}_\mu$ replacing $\Delta_\mu$):
\begin{equation}\label{orthogonalityr}
\forall \kappa,\nu\in\Lambda^{(m,n)}:\qquad \langle \tilde{\psi}_{\tilde{\xi}_\kappa},\tilde{\psi}_{\tilde{\xi}_\nu}\rangle_{\tilde{\Delta} }= 0\quad \text{if}\ \kappa\neq \nu .
\end{equation}
Rewritten in terms of  the $q=1$ hyperoctahedral Hall-Littlewood polynomials, the inner product on the LHS of Eq. \eqref{orthogonalityr}
becomes (cf. Eqs. \eqref{R-orthogonality}, \eqref{weights-conjugate}):
\begin{equation}\label{sum-nm}
 \sum_{\lambda \in\Lambda^{(m,n)} } 
 \frac{ \tilde{R}_{\lambda} (\tilde{\xi}_\kappa) \tilde{R}_{\lambda} (\tilde{\xi}_\nu )\, m! }{(1-\alpha_+)^{\mathrm{m}_0(\lambda)} 
 (1-\alpha_-){}^{\mathrm{m}_n(\lambda)}  
\prod_{0\leq i\leq n}  \mathrm{m}_i (\lambda)! } ,
 \end{equation}
 with $ \tilde{R}_{\lambda} (\xi) $ and $ \tilde{\xi}_\kappa $ taken from Eqs. \eqref{HLr} and \eqref{BA-solr}, respectively.
In particular, if $m=1$ then the sum in Eq. \eqref{sum-nm} is of the form
\begin{equation}\label{sumn1}
 \sum_{0\leq \lambda \leq n} 
 \frac{R_{\lambda} (\vartheta_{\kappa}) R_{\lambda} (\vartheta_\nu) }{(1-\alpha_+)^{\delta_{\lambda} } (1-\alpha_-){}^{\delta_{n-\lambda} }}  
 \end{equation}
with
$0\leq \kappa\neq\nu \leq n$, where $\vartheta_0,\ldots,\vartheta_n$ refer to the $\textrm{q}=0$ Askey-Wilson  roots from Eq. \eqref{roots}.
Since the  sum in Eq. \eqref{sumn1} coincides with that of the inner product in Case (a) of Remark \ref{orthogonality:rem} (specialized to $m=1$), in this simplest situation
 the asserted orthogonality
is  immediate from the remark in question.

If on the other hand $m>1$, then the inner product in Eq. \eqref{sum-nm}  decomposes into  a sum of contributions of the form
\begin{align*}
&\sum_{\lambda \in\Lambda^{(m,n)} } 
 \frac{R_{\lambda_1}( \vartheta_{\texttt{k}_1}) \cdots  R_{\lambda_m}( \vartheta_{\texttt{k}_m})  
 R_{\lambda_1}( \vartheta_{\texttt{n}_1}) \cdots  R_{\lambda_m}( \vartheta_{\texttt{n}_m}) m!  }{(1-\alpha_+)^{\mathrm{m}_0(\lambda)} 
 (1-\alpha_-){}^{\mathrm{m}_n(\lambda)}  
\prod_{0\leq i\leq n}  \mathrm{m}_i (\lambda)! } \\
&= 
\sum_{0\leq \lambda_1,\ldots,\lambda_m\leq n}
\frac{R_{\lambda_1}( \vartheta_{\texttt{k}_1}) \cdots  R_{\lambda_m}( \vartheta_{\texttt{k}_m})  
 R_{\lambda_1}( \vartheta_{\texttt{n}_1}) \cdots  R_{\lambda_m}( \vartheta_{\texttt{n}_m})}
 {  \prod_{1\leq j \leq m}  (1-\alpha_+)^{\delta_{\lambda_j} } (1-\alpha_-  )^{\delta_{n-\lambda_j}}} \\
 &= \prod_{1\leq j\leq m} \sum_{0\leq \lambda_j\leq n} 
  \frac{R_{\lambda_j} (\vartheta_{\texttt{k}_j}) R_{\lambda_j} (\vartheta_{\texttt{n}_j}) }{(1-\alpha_+)^{\delta_{\lambda_j} } (1-\alpha_-){}^{\delta_{n-\lambda_j} }}  ,
 \end{align*}
where the $m$-tuple $(\texttt{k}_1, \texttt{k}_2,\ldots ,\texttt{k}_m)$ denotes a reordering $(\kappa_{\sigma(1)},\kappa_{\sigma(2)},\ldots,\kappa_{\sigma(m)})$ of  $\kappa\in\Lambda^{(m,n)}$
and  the $m$-tuple $(\texttt{n}_1,\texttt{n}_2,\ldots ,\texttt{n}_m)$ denotes a reordering of $\nu\in\Lambda^{(m,n)}$ (not necesarilly stemming from the same permutation $\sigma\in S_m$). From the orthogonality for $m=1$ it is now clear that all such contributions vanish, unless $\texttt{k}_j=\texttt{n}_j$ for $j=1,\ldots, m$, i.e., except when $\kappa=\nu$ and both  reorderings coincide.

It remains to verify that the eigenvalue equation in Eq. \eqref{ev-eq-r} is satisfied at $\xi=\tilde{\xi}_\kappa$ for $\kappa\in\Lambda^{(m,n)}$. 
Rewritten in terms of $\lambda=\mu^\prime\in\Lambda^{(m,n)}$, this eigenvalue equation reads explicitly (cf. Eq. \eqref{rec-rel}):
\begin{align}\label{rec-rel-r}
\tilde{E}(\tilde{\xi}_\kappa) \tilde{R}_\lambda (\tilde{\xi}_\kappa)
= & \bigl(\beta_+ \mathrm{m}_0(\lambda) +
\beta_-  \mathrm{m}_n(\lambda)   \bigr)
 \tilde{R}_\lambda (\tilde{\xi}_\kappa)\\
+&\sum_{\substack{1\leq j \leq m\\ \lambda+e_j\in\Lambda^{(m,n)}}}
(1-\alpha_+ )^{\delta_{\lambda_j} }
 \mathrm{m}_{\lambda_j}(\lambda) \tilde{R}_{\lambda+e_j} (\tilde{\xi}_\kappa) \nonumber \\
+&\sum_{\substack{1\leq j \leq m\\ \lambda-e_j\in\Lambda^{(m,n)}}}
(1-\alpha_- )^{\delta_{n-\lambda_j} }
m_{\lambda_j}(\lambda)  \tilde{R}_{\lambda-e_j}  (\tilde{\xi}_\kappa) . \nonumber
\end{align}
For instance, if $m=1$ then Eq. \eqref{rec-rel-r} simplifies to
\begin{align}\label{rec-rel-r-m=1}
2\cos& (\vartheta_\kappa)  R_\lambda (\vartheta_\kappa)
=  \bigl(\beta_+ \delta_\lambda +
\beta_-  \delta_{n-\lambda}  \bigr)
  R_\lambda (\vartheta_\kappa)+ \\
&
(1-\alpha_+ )^{\delta_{\lambda} }
 (1-\delta_{n-\lambda}) R_{\lambda+1} ( \vartheta_\kappa) 
+
(1-\alpha_- )^{\delta_{n-\lambda} }
(1-\delta_\lambda) R_{\lambda-1}  (\vartheta_\kappa) ,  \nonumber
\end{align}
where $0\leq\kappa, \lambda\leq n$. Apart from a missing overall factor $(1-q)$ on both sides (which was actually divided out  at the start of Subsection \ref{sec5.1}), Eq. \eqref{rec-rel-r-m=1} coincides precisely with the
$m=1$ specialization of the eigenvalue equation from Theorem \ref{diagonalization:thm} at $\mu=\lambda^\prime$ (which agrees with the observation that the dependence on $q$ drops out when $m=1$).  Upon recalling Remark \ref{smooth:rem},
this settles the validity of Eq. \eqref{rec-rel-r-m=1} for the full parameter regime $p_\pm,q_\pm\in (-1,1)$.

Moreover, by virtue of   Eq.  \eqref{rec-rel-r-m=1} one has more generally that for $m>1$ and any $\sigma\in S_m$:
\begin{align*}
&2\sum_{1\leq j\leq m} \cos (\vartheta_{\kappa_j} )     \prod_{1\leq l\leq m} R_{\lambda_l} (\vartheta_{\kappa_{\sigma(l)}})
= \\
& \sum_{1\leq j\leq m} \ \bigl(\beta_+ \delta_{\lambda_j}+
\beta_-  \delta_{n-\lambda_j}   \bigr)
 \prod_{1\leq l\leq m} R_{\lambda_l} (\vartheta_{\kappa_{\sigma(l)}})   \\
+&\sum_{1\leq j \leq m}
(1-\alpha_+ )^{\delta_{\lambda_j} } (1- \delta_{n-\lambda_j} )
R_{\lambda_j+1} (\vartheta_{\kappa_{\sigma(j)}}) \prod_{\substack{1\leq l\leq m\\l\neq j}} R_{\lambda_l} (\vartheta_{\kappa_{\sigma(l)}})
  \nonumber \\
+&\sum_{1\leq j \leq m}
(1-\alpha_- )^{\delta_{n-\lambda_j} }
(1- \delta_{\lambda_j} )  R_{\lambda_j-1} (\vartheta_{\kappa_{\sigma(j)}})  \prod_{\substack{1\leq l\leq m\\l\neq j}} R_{\lambda_l} (\vartheta_{\kappa_{\sigma(l)}})  \\
=&  \bigl(\beta_+ \mathrm{m}_0(\lambda) +
\beta_-  \mathrm{m}_n(\lambda)   \bigr)
 \prod_{1\leq l\leq m} R_{\lambda_l} (\vartheta_{\kappa_{\sigma(l)}})    \\
+&\sum_{\substack{1\leq j \leq m\\ \lambda+e_j\in\Lambda^{(m,n)}}}
(1-\alpha_+ )^{\delta_{\lambda_j} }
 \mathrm{m}_{\lambda_j}(\lambda) R_{\lambda_j+1}  (\vartheta_{\kappa_{\sigma(j)}}) \prod_{\substack{1\leq l\leq m\\l\neq j}} R_{\lambda_l} (\vartheta_{\kappa_{\sigma(l)}})  \nonumber \\
+&\sum_{\substack{1\leq j \leq m\\ \lambda-e_j\in\Lambda^{(m,n)}}}
(1-\alpha_- )^{\delta_{n-\lambda_j} }
m_{\lambda_j}(\lambda) R_{\lambda_j-1}  (\vartheta_{\kappa_{\sigma(j)}}) \prod_{\substack{1\leq l\leq m\\l\neq j}} R_{\lambda_l} (\vartheta_{\kappa_{\sigma(l)}}),
\end{align*}
which entails Eq. \eqref{rec-rel-r} through symmetrization by summing over all $\sigma\in S_m$ on both sides.

\section*{Acknowledgements}
This work was supported in part by the {\em Fondo Nacional de Desarrollo
Cient\'{\i}fico y Tecnol\'ogico (FONDECYT)} Grant \# 1210015.

\bibliographystyle{amsplain}

\end{document}